\newtheorem{thm}{Theorem}
\newtheorem{example}{Example}
\newtheorem{defn}{Definition}
\newtheorem{rem}{Remark}
\def\BibTeX{{\rm B\kern-.05em{\sc i\kern-.025em b}\kern-.08em
    T\kern-.1667em\lower.7ex\hbox{E}\kern-.125emX}}
\begin{document}

\title{Coded Caching with Shared Caches from Generalized Placement Delivery Arrays\\

\thanks{This work was supported partly by the Science and Engineering Research Board (SERB) of Department of Science and Technology (DST), Government of India, through J.C Bose National Fellowship to B. Sundar Rajan.
}
}

\author{\IEEEauthorblockN{Elizabath Peter and B. Sundar Rajan}
	\IEEEauthorblockA{\textit{Department of Electrical Communication Engineering} \\
		\textit{Indian Institute of Science}\\
		Bangalore, India \\
		\{elizabathp,bsrajan\}@iisc.ac.in}
}

\maketitle

\begin{abstract}
We consider the coded caching problem with shared caches where several users share a cache, but each user has access to only a single cache. For this network, the fundamental limits of coded caching are known for centralized and decentralized settings under uncoded placement. In the centralized case, to achieve the gains offered by coded caching, one requires a sub-packetization which increases exponentially with the number of caches. The dedicated cache networks had a similar issue, and placement delivery arrays (PDAs) were introduced as a solution to it. Using the PDA framework, we propose a procedure to obtain new coded caching schemes for shared caches with lower sub-packetization requirements. The advantage of this procedure is that we can  transform all the existing PDA structures into coded caching schemes for shared caches, thus resulting in low sub-packetization schemes.
We also show that the optimal scheme given by Parrinello, Ünsal and Elia (Fundamental Limits of Coded Caching with Multiple Antennas, Shared Caches and Uncoded Prefetching) can be recovered using a Maddah-Ali Niesen PDA.

\end{abstract}

\begin{IEEEkeywords}
 Coded caching, shared cache, sub-packetization.
\end{IEEEkeywords}

\section{Introduction}
The accelerated growth in the number of users and their rising demand for video-on-demand services have a multiplicative effect on the peak hour data traffic. Caching has been proposed as a promising tool to shift some of the peak hour traffic to off-peak times by utilizing the memories distributed across the network. Coded caching problem consists of two phases: \textit{placement (or prefetching) phase} and \textit{delivery phase} \cite{MaN}. In the placement phase, the network is not congested, and caches are populated with portions of file contents. The placement is independent of the future demands of the users. The delivery phase starts when the users' demands are informed to the server, and the main limitation here is the transmission load (number of bits transmitted) over the shared link that is required to serve the users. 

Maddah-Ali and Niesen introduced the idea of coded caching  in \cite{MaN}, which pointed out the need for the joint design of placement and delivery phases in a caching system. The authors considered a network with a single server having access to a library of $N$ equal-length files connected to $K$ users through a shared error-free link. Each user possesses a memory of size equal to $M$ files. For this network, the scheme in \cite{MaN} (which we refer to as MN scheme henceforth) is shown to be optimal for distinct demands under uncoded prefetching \cite{YMA}. The coded caching problem has then been extended in several directions, such as decentralized caching \cite{MaN2}, online caching \cite{PMN} and many other settings as well. 

In \cite{PUE}, the authors considered a setting where the server is equipped with multiple antennas and is connected to $K$ users assisted by $\Lambda \leq K$ helper caches. Each cache serves an arbitrary number of users, but each user can access only a single cache. For this setting, an optimal centralized coded caching scheme based on uncoded placement was proposed \cite{PUE}. We presently focus only on the single-antenna case and refer to the corresponding scheme as the PUE scheme. For the MN scheme and the PUE scheme, the sub-packetization level which is defined as the number of packets to which a file is split into, grows exponentially with respect to the number of caches. Therefore, it is desirable to have schemes with lower sub-packetization level which will make them suitable for practical realization. In light of the above, for a dedicated cache network, Yan \textit{et al.} \cite{YCT} introduced the idea of Placement Delivery Array (PDA) which could provide schemes with low sub-packetization levels, and characterized the placement and delivery phases in a single array. In this work, we make use of the PDA structures to identify new coded caching schemes for shared caches having lesser sub-packetization constraints than the PUE scheme. 

\subsection{Related Work}
Coded caching with shared caches was first addressed in \cite{MaN2}, which considered a setting where each cache serves an equal number of users and proposed a decentralized caching scheme for the same. In \cite{PUE}, the centralized setting is considered and an optimal coded caching scheme is proposed for distinct demand case. For non-distinct demands, an improved delivery scheme was introduced in  \cite{KDTR}. A caching scheme based on coded placement was proposed in \cite{IZY}, which outperforms the scheme in \cite{PUE} in certain memory regimes. An optimal decentralized coded caching scheme for shared caches was proposed in \cite{DuTh}. An alternate optimal delivery scheme for the decentralized setting was given in \cite{PR}. 

For dedicated cache network, several coded caching schemes with lower sub-packetization have been introduced. The relationship with sub-packetization and transmission load was first discussed by Shanmugam \textit{et al.} in \cite{SJTLD} and proposed a scheme based on grouping method. \cite{YCT} was the first to introduce the concept of PDA to describe caching schemes, and presented two new constructions of PDA that would result in schemes with lesser number of file division than the MN scheme. \cite{YCT} also showed that MN scheme can be represented using a PDA, which in fact belongs to the class of regular PDA. In regular PDA, each integer occurs the same number of times. Various coded caching schemes based on the design of PDA were proposed recently \cite{PDA1} - \cite{PDA6}. Also, there exists several other constructions that focus on reducing sub-packetization, some of which uses resolvable block design and linear block codes \cite{TaR}, the strong edge colouring of subset graphs \cite{YTCC}, projective space \cite{Pkrishnan} etc. Recently, a new class of PDA termed as consecutive and $t$-cyclic PDA were introduced \cite{SaR} to describe the delivery scheme for a special class of multi-access coded caching problem.
\\
\\
\noindent \textit{Notations}: For any integer $n$, $[n]$ denotes the set $\{1,2,...,n\}$. For any set $\mathcal{S}$, $|\mathcal{S}|$ denotes the cardinality of $\mathcal{S}$. Binomial coefficients are denoted by $\binom{n}{k}$, where $\binom{n}{k} \triangleq \frac{n!}{k!(n-k)!}$ and $\binom{n}{k}$ is zero for $n < k$.
Bold uppercase and lowercase letters are used to denote matrices and column vectors, respectively. The contents stored in each helper cache, $\lambda$ is denoted by $\mathcal{Z}_{\lambda}$. An identity matrix of size $n$ is denoted as $\mathbf{I}_{n}$. The set of positive integers is denoted by $\mathbb{Z}^{+}$. 

\subsection{Our Contributions}
Our contributions are summarized below:
\begin{itemize}
	\item A procedure is proposed to obtain new coded caching schemes for shared cache from PDAs. The advantage of this is that we can transform all the existing PDA structures into coded caching schemes for shared caches, thus resulting in low sub-packetizaton schemes.
	
	\item We show that the optimal scheme in \cite{PUE} (PUE scheme) can be recovered using a MN PDA.
\end{itemize}

\section{System Model and Background}
In this section, we first discuss the shared caching problem, and then briefly review the definition of PDA and see how it represents a coded caching scheme.

\begin{figure}[t!]
	\begin{center}
		\captionsetup{justification=centering}
		\includegraphics[width=\columnwidth]{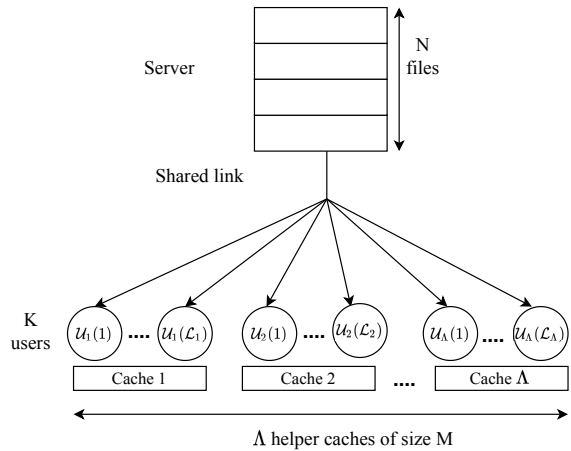}
		\caption{Problem setting considered in this work.}
		\label{fig:setting}
	\end{center}
\end{figure}

\subsection{Setting}
We consider a shared cache network as illustrated in Fig~\ref{fig:setting}. There is a server with $N$ equal-length files $\mathcal{W}=\{W^1, W^2,\ldots, W^N\}$, connected to $K$ users through a shared error-free link. There are $\Lambda\leq K$ helper caches, each of size equal to $M$ files, where $M \in [0,N]$. Each user gets access to one of the helper caches and there is no limit on the number of users served by each cache. We call this problem as a $(\Lambda, K, M, N )$ shared caching problem.
 
The $(\Lambda, K,M,N)$ shared caching problem operates in three phases:\\
\textit{a) Placement phase}: In this phase, the server fills the caches with parts of the file contents, satisfying the memory constraint. This is carried out without knowing the future demands of the users and their association to the caches.\\
\textit{b) User-to-cache association phase:} The placement phase is followed by an additional phase in which each user gets connected to one of the helper nodes from which it can download the contents at zero cost. The set of users assigned to cache $\lambda \in [\Lambda]$ is denoted as $\mathcal{U}_{\lambda}$, and all these disjoint sets together form a partition of the set of $K$ users. The overall association of the users to the caches is represented as:
\begin{equation*}
\mathcal{U} = \{\mathcal{U}_1, \mathcal{U}_2,\ldots, \mathcal{U}_{\Lambda}\}.
\end{equation*}
This assignment of users to caches is independent of the cached contents and the subsequent demands. For any user-to-cache association $\mathcal{U}$, the association profile $\mathcal{L}$ describes the number of users served by each cache. Therefore,
\begin{equation*}
\mathcal{L} = \{\mathcal{L}_1, \ldots, \mathcal{L}_{\Lambda}\}
\end{equation*}
where $\mathcal{L}_i = |\mathcal{U}_i|$. Without loss of generality, assume that $\mathcal{L}_i \geq \mathcal{L}_j$ $\forall i\leq j$ and $\mathcal{U}_i$ to be an ordered set.
Several user-to-cache associations result in the same $\mathcal{L}$. Therefore, each $\mathcal{L}$ represents a class of $\mathcal{U}$. \\
\textit{c) Delivery phase:} Users inform their demands to the server. Let the index of the file demanded by the user $k \in [K]$ be denoted as $d_k$. On receiving the demand vector $\mathbf{d}=(d_1,d_2,\ldots,d_K)$, the server broadcasts a message over the shared link to the users which would enable each of them to decode their requested files. The aim of the server is to design the placement and delivery scheme so as to minimize the number of transmissions required to serve the users. 

Let $R(\mathcal{L},\mathbf{d})$ denote the transmitted load (normalized by the file size) over the shared link to satisfy the demand $d$ when the association profile is $\mathcal{L}$. The worst-case load corresponds to $\underset{\mathbf{d}\in\{1,\ldots,N\}^{K}}{\textrm{max}}R(\mathcal{L},\mathbf{d})$. The optimal worst-case load for a given $\mathcal{U}$ with association profile $\mathcal{L}$ is denoted by $R^{*}(\mathcal{L})$, which is the minimum achievable worst-case load over all possible caching and delivery schemes.

\subsection{Placement Delivery Array (PDA)}
In this part, we consider the dedicated cache network which consists of a server having access to $N$ equally-sized files, connected to $K$ users. Each user is endowed with a cache of size equal to $M$ files.
\begin{defn}
	(\cite{YCT}) For positive integers $K, F, Z$ and $S$, an $F \times K$ array $\mathbf{P}=(p_{j,k})$, $j \in [0,F)$ and $k \in [0,K)$, composed of a specific symbol $\star$ and $S$ non-negative integers $0,1,\ldots, S-1$, is called a $(K,F,Z,S)$ placement delivery array (PDA) if it satisfies the following three conditions: \\
	\textit{C1}. The symbol $\star$ appears $Z$ times in each column.\\
	\textit{C2}. Each integer occurs at least once in the array.\\
	\textit{C3}. For any two distinct entries $p_{j_1,k_1}$ and $p_{j_2,k_2}$,\\ $p_{j_1,k_1}=p_{j_2,k_2}=s$ is an integer only if
	\begin{enumerate}[label=(\alph*)]
		
		\item $j_1 \neq j_2$, $k_1 \neq k_2$, i.e., they lie in distinct rows and distinct columns, and
		\item $p_{j_1,k_2}=p_{j_2,k_1}=\star$, i.e., the corresponding $2\times2$ sub-array formed by rows $j_1, j_2$ and columns $k_1,k_2$ must be of the following form:\\
		\begin{center}
			$\begin{pmatrix}
			s & \star\\
			\star & s
			\end{pmatrix}$
			\hspace{0.3cm}or\hspace{0.3cm}
			$\begin{pmatrix}
			\star & s \\
			s & \star
			\end{pmatrix}$ 
			
		\end{center}
		
	\end{enumerate}
\end{defn}

\begin{thm}
	(\cite{YCT}) For a given $(K, F, Z, S)$ PDA $\mathbf{P}=(p_{j,k})_{F \times K}$, a $(K,M,N)$ coded caching scheme can be obtained with sub-packetization $F$ and $\frac{M}{N}=\frac{Z}{F}$ using Algorithm 1. For any demand $\mathbf{d}$, the demands of all the users are met with a transmission load of $R=\frac{S}{F}$.
\end{thm}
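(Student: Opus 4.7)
The plan is to describe explicit placement and delivery procedures derived directly from the PDA $\mathbf{P}$, and verify the three requirements: (i) the memory constraint $M/N = Z/F$, (ii) the load $R = S/F$, and (iii) decodability for every demand vector $\mathbf{d}$.

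First I would specify the placement. Split each file $W^n$ into $F$ equal packets indexed by the rows: $W^n = \{W^n_j : j \in [0,F)\}$. For each user $k \in [0,K)$, let the cache be
\begin{equation*}
\mathcal{Z}_k = \{W^n_j : p_{j,k} = \star,\ n \in [N]\}.
\end{equation*}
By condition \textit{C1}, the column $k$ has exactly $Z$ star entries, so user $k$ stores $Z$ packets from each of the $N$ files, i.e.\ $NZ$ packets in total. Since the library consists of $NF$ packets, the memory ratio is $M/N = Z/F$ as required.

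Next I would define the delivery scheme. Given a demand vector $\mathbf{d} = (d_0,\dots,d_{K-1})$, for every integer $s \in \{0,1,\dots,S-1\}$ the server broadcasts
\begin{equation*}
X_s = \bigoplus_{(j,k):\, p_{j,k}=s} W^{d_k}_{j}.
\end{equation*}
Condition \textit{C2} ensures each $s$ appears at least once, so there are exactly $S$ transmissions of one packet each, giving the claimed load $R = S/F$. Note that by \textit{C3(a)}, within a fixed $s$ the pairs $(j,k)$ involved in $X_s$ lie in pairwise distinct rows and columns, so each user $k$ appears in at most one term of $X_s$.

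The main step, and the one that uses the full strength of the PDA axioms, is decodability. Fix a user $k$ and a packet index $j$ she needs. If $p_{j,k} = \star$, the packet $W^{d_k}_j$ is already in $\mathcal{Z}_k$. Otherwise $p_{j,k} = s$ for some integer $s$, and $W^{d_k}_j$ appears in $X_s$. For any other term $W^{d_{k'}}_{j'}$ of $X_s$ with $(j',k') \neq (j,k)$, \textit{C3(a)} gives $j' \neq j$ and $k' \neq k$, and \textit{C3(b)} forces $p_{j',k} = \star$. Hence $W^{d_{k'}}_{j'} \in \mathcal{Z}_k$, so user $k$ can cancel every interfering term in $X_s$ and recover $W^{d_k}_j$. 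The main obstacle is precisely this cancellation step: it has to be invoked separately for each pair (row, demanded packet) and it is where condition \textit{C3(b)} (the $2\times 2$ star structure) is essential. Once this is verified for arbitrary $\mathbf{d}$, the claim follows.
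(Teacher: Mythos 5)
Your proof is correct and follows the same route the paper takes (and sketches only informally after Algorithm~1): condition \textit{C1} yields the memory ratio $Z/F$, condition \textit{C2} yields exactly $S$ unit-packet transmissions and hence $R=S/F$, and condition \textit{C3} guarantees decodability. Your explicit cancellation argument via \textit{C3(b)} — showing every interfering packet in $X_s$ lies in user $k$'s cache — correctly fills in the step the paper dismisses with ``Condition $C3$ insures the decodability condition.''
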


\begin{algorithm}
	\renewcommand{\thealgorithm}{1}
	\caption{Coded caching scheme based on PDA \cite{YCT}}
	\begin{algorithmic}[1]
		\Procedure{Placement}{$\mathbf{P},\mathcal{W}$}       
		\State Split each file $W^n$ in $\mathcal{W}$ into $F$ packets: $W^n =\{W^n_j: j \in [0,F)\}$
		\For{\texttt{$k \in 0,1,\ldots,K-1$}}
		\State  $\mathcal{Z}_k$ $\leftarrow$ $\{W^n_{j}: p_{j,k}=\star, \forall n \in [N]\}$
		\EndFor
		\EndProcedure
		
		\Procedure{Delivery}{$\mathbf{P},\mathcal{W},\mathbf{d}$} 
		\For{\texttt{$s = 0,1,\ldots, S-1$}}
		\State Server sends $\underset{p_{j,k}=s, j\in [0,F),k\in[0,K)}{\bigoplus}W^{d_k}_j$
		\EndFor    
		\EndProcedure
	\end{algorithmic}
\end{algorithm}

In a $(K,F,Z,S)$ PDA $\mathbf{P}$, the rows represent packets and the columns represent users. For any $k \in [0,K)$ if $p_{j,k}=\star$, then it implies that the user $k$ has access to the $j^{th}$ packet of all the files. If $p_{j,k}=s$ is an integer, it means that user $k$ does not have access to the $j^{th}$ packet of any of the files. Condition $C1$ guarantees that all users have access to some $Z$ packets of all the files. According to the delivery procedure in Algorithm $1$, the server sends a linear combination of the requested packets indicated by the integer $s$ in the PDA. Therefore, condition $C2$ implies that the number of messages transmitted by the server is exactly $S$ and the transmitted load is $\frac{S}{F}$. Condition $C3$ insures the decodability condition.

The following theorem shows that MN scheme can be obtained from a PDA.
\begin{thm}
	(\cite{YCT}) For a $(K,M,N)$ caching system with $\frac{M}{N} \in \{0, \frac{1}{K}, \frac{2}{K}, \ldots, 1 \}$ and let $t=\frac{KM}{N}$, there exists a $t+1$-$(K,F,Z,S)$ PDA with $F=\binom{K}{t}$, $Z=\binom{K-1}{t-1}$ and $S=\binom{K}{t+1}$.
\end{thm}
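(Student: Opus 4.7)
The plan is to construct the PDA explicitly by mimicking the file-splitting of the Maddah-Ali--Niesen scheme. I would index the $F = \binom{K}{t}$ rows by the $t$-subsets $\mathcal{T} \subseteq [K]$ and the $K$ columns by the user indices $k \in [K]$. Define the entry $p_{\mathcal{T},k}$ to be $\star$ whenever $k \in \mathcal{T}$, and otherwise assign it the integer corresponding to the $(t+1)$-subset $\mathcal{T} \cup \{k\}$ under a fixed bijection $\phi : \binom{[K]}{t+1} \to \{0,1,\ldots,S-1\}$, where $S = \binom{K}{t+1}$.

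Verifying conditions $C1$ and $C2$ is essentially counting. The number of $t$-subsets of $[K]$ containing a fixed element $k$ is $\binom{K-1}{t-1} = Z$, so each column has exactly $Z$ stars, giving $C1$. For $C2$, each $(t+1)$-subset $\mathcal{S}$ is hit by exactly $t+1$ pairs $(\mathcal{T},k)$ with $\mathcal{T} \cup \{k\} = \mathcal{S}$ and $k \notin \mathcal{T}$, one for each choice of $k \in \mathcal{S}$. Hence every integer in $\{0,1,\ldots,S-1\}$ appears, and in fact appears exactly $t+1$ times; this simultaneously establishes the $(t+1)$-regularity stated in the theorem.

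The substantive step is $C3$. Suppose $p_{\mathcal{T}_1,k_1} = p_{\mathcal{T}_2,k_2} = s$ and the two entries are distinct. Let $\mathcal{S} = \phi^{-1}(s)$; by construction $\mathcal{T}_1 \cup \{k_1\} = \mathcal{T}_2 \cup \{k_2\} = \mathcal{S}$ with $k_1 \notin \mathcal{T}_1$ and $k_2 \notin \mathcal{T}_2$. If $k_1 = k_2$ held, then $\mathcal{T}_1 = \mathcal{S} \setminus \{k_1\} = \mathcal{T}_2$, contradicting distinctness; symmetrically $\mathcal{T}_1 \neq \mathcal{T}_2$. So the entries lie in distinct rows and distinct columns. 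For the off-diagonal entries, $k_2 \in \mathcal{S}$ together with $k_2 \neq k_1$ forces $k_2 \in \mathcal{T}_1 = \mathcal{S} \setminus \{k_1\}$, whence $p_{\mathcal{T}_1,k_2} = \star$; symmetrically $p_{\mathcal{T}_2,k_1} = \star$. This is exactly the $2 \times 2$ pattern demanded by $C3$.

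The main obstacle is not conceptual but organizational: once rows are indexed by $t$-subsets and integers by $(t+1)$-subsets, every verification reduces to manipulating the single set identity $\mathcal{T} \cup \{k\} = \mathcal{S}$. The resulting cache-size parameter $M/N = Z/F = t/K$ matches the assumed memory regime, completing the construction.
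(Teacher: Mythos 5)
Your construction is exactly the one the paper uses (rows indexed by $t$-subsets $\mathcal{T}$, entry $\star$ when $k\in\mathcal{T}$ and $f(\mathcal{T}\cup\{k\})$ otherwise, as in equation \eqref{pda}), and your verifications of C1--C3 and of $(t+1)$-regularity are correct. The paper merely states this construction and cites \cite{YCT} for the checks, so your write-up is the same approach carried out in full detail.
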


MN scheme can be represented using a regular PDA. In MN scheme, choose $t \in [0,K]$ such that $F = \binom{K}{t}$. Each row of the PDA is indexed by sets $\mathcal{T} \subseteq [K]$ where $|\mathcal{T}|=t$. Each user stores subfiles $W^{n}_{\mathcal{T}}, \forall n \in [N]$ if $k \in \mathcal{T}$, thus $Z=\binom{K-1}{t-1}$. In the delivery phase of the scheme, there are $\binom{K}{t+1}$ transmissions. Therefore, $\binom{K}{t+1}$ distinct integers are required to represent each of those transmissions resulting in $S = \binom{K}{t+1}$. To fill the PDA with these integers, a bijection $f$  is defined from the $t+1$ sized subsets of $\{1,2,\ldots,K\}$ to $[0,S)$ such that 
\begin{equation}
p_{\mathcal{T},k} = 
\begin{cases}
f(\mathcal{T}\cup \{k\}), & \textrm{if \hspace{0.05cm}} k \notin \mathcal{T}.\\
\star, & \textrm{elsewhere}.
\end{cases}
\label{pda}
\end{equation}

From the above expression, it is evident that each integer appears exactly $t+1$ times, hence the PDA obtained belongs to a $t+1$- regular PDA. The PDA which represents the MN scheme is referred as MN PDA henceforth.

\section{A New Class of Coded Caching schemes for Shared Caches using PDAs }

In this section, we first formally define a modified placement delivery structure called Generalized PDA, which can completely characterize all the three phases in a shared caching problem in a single array. We then describe the procedure to obtain new coded caching schemes for shared caches using the PDA framework which led to the design of Generalized PDAs.


\subsection{Generalized PDA}
\begin{defn}
	For positive integers $K,F,Z,S$ and $I$, an $F \times K$ array $G=(g_{j,k})$, $j \in [0,F)$, $k \in [0, K)$ composed of $\star $ and integer entries from $S^{(I)}$ where  
	$S^{(I)}=\{s^{(i)} : s \in [0,S), i\in [I]\}$ is called a $(K,F,Z,S^{(I)})$ generalized PDA if it satisfies the following conditions:
	\begin{enumerate}[label= C\arabic*.]
		\item The symbol $\star$ appears $Z$ times in each column.
		\item Each integer and each superscript occurs at least once in the array.
		\item For any two distinct entries $g_{j_1,k_1}$ and $g_{j_2,k_2}$, $g_{j_1,k_1}=g_{j_2,k_2}=s^{(i)}$ is a numerical entry only if
		\begin{enumerate}[label=\alph*)]
			
			\item $j_1 \neq j_2$, $k_1 \neq k_2$, i.e., they lie in distinct rows and distinct columns, and
			\item $g_{j_1,k_2}=g_{j_2,k_1}=\star$, i.e., the corresponding $2\times2$ sub-array formed by rows $j_1, j_2$ and columns $k_1,k_2$ must be of the following form:\\
			\begin{center}
				$\begin{pmatrix}
				s^{(i)} & \star\\
				\star & s^{(i)}
				\end{pmatrix}$
				\hspace{0.3cm}or\hspace{0.3cm}
				$\begin{pmatrix}
				\star & s^{(i)} \\
				s^{(i)} & \star
				\end{pmatrix}$ 
				
			\end{center}
		\end{enumerate}
		\item For any four distinct entries $g_{j_1,k_1}$, $g_{j_1,k_2}$, $g_{j_2,k_1}$ and $g_{j_2,k_2}$, if $g_{j_1,k_1} =s^{(i_1)}$, $g_{j_2,k_1}=\star$ and  $g_{j_1,k_2}=s^{(i_2)}$, then $g_{j_2,k_2}=\star$.

	\end{enumerate}
	
\end{defn}

\subsection{Description of the scheme} 

For a given $(\Lambda,K,M,N)$ shared caching problem, we start with a $(\Lambda, F, Z, S)$ PDA $\mathbf{P} = (p_{j,\lambda})$ where $j \in [0,F)$ and $\lambda \in [0,\Lambda)$ with $\frac{M}{N}=\frac{Z}{F}$. The parameter $F$ determines the sub-packetization level in the resulting scheme. Each column in $\mathbf{P}$ represents a cache and the caches' are numbered from $[0,\Lambda)$.

Algorithm $2$ explains how to obtain a coded caching scheme for shared cache using the PDA $\mathbf{P}$. Some remarks on Algorithm $2$ are summarized here: The `$\star$'s in $\mathbf{P}$ indicate the contents cached in each cache. To all the integer entries in $\mathbf{P}$, associate `$1$' as a superscript (line $8$). The user-to-cache association $\mathcal{U}$ determines the side-information possessed by each user. Therefore, once $\mathcal{U}$ and $\mathcal{L}$ are known, we construct an array $\mathbf{G}_{F\times K}$ from $\mathbf{P}$ as follows: each column  $\mathbf{p}_{_\lambda}$, $\lambda \in [0,\Lambda)$  is repeated $\mathcal{L}_{\lambda}$ times. In each replication of $\mathbf{p}_{_\lambda}$, the superscript associated with the integer entries in $\mathbf{p}_{_\lambda}$ gets incremented by one. The array $\mathbf{G}$ obtained from the user-to-cache association procedure is a generalized PDA, which is defined in Definition $2$.

\begin{algorithm}
	\renewcommand{\thealgorithm}{2}
	\caption{Coded caching scheme for a $(\Lambda,K,M,N)$ shared cache network using PDA $\mathbf{P}_{F \times \Lambda}$ with $\frac{M}{N}=\frac{Z}{F}$. }
	\begin{algorithmic}[1]
		
		\Procedure{Placement}{$\mathbf{P},\mathcal{W}$}       
		\State Split each file $W^n$ in $\mathcal{W}$ into $F$ packets: $W^n =\{W^n_j: j \in [0,F)\}$
		\State $u \gets 1$
		\For{\texttt{$\lambda \in [0,\Lambda)$}}
		\For{\texttt{$j \in [0,F)$}}
		\State  $\mathcal{Z}_{\lambda} \gets $ $\{W^n_{j}: p_{j,\lambda}=\star, \forall n \in [N]\}$
		
		\If {$p_{j,\lambda} \neq \star$ } 
		\State Associate the superscript $u$ to $p_{j,\lambda}$, i.e, $p_{j,\lambda} = p_{j,\lambda}^{(u)}$
		\EndIf
		\EndFor
		\EndFor
		\EndProcedure
		
		\Procedure{User-to-cache association}{$\mathbf{P},\mathcal{U}$}
		\State Obtain $\mathcal{L} \gets (\mathcal{L}_0, \mathcal{L}_1,\ldots, \mathcal{L}_{\Lambda-1})$, $\mathcal{L}_i \geq \mathcal{L}_j$ $\forall i \leq j$ 
		\State Construct $\mathbf{G} = (g_{j,k}),$ $j \in [0,F)$, $k \in [0,K)$
		\State $k \gets 0$ 
		\For {$\lambda \in [0,\Lambda)$}
		\If {$\mathcal{L}_{\lambda}>0$}
		\For {$i \in [0,\mathcal{L}_{\lambda})$}
		\State $\mathbf{g}_k = \mathbf{p}_{\lambda}^{(u+i)}$ where $\star^{(u+i)}=\star$
		\State $k \gets k+1$		   		
		\EndFor
		\EndIf
		\EndFor
		\EndProcedure

		\Procedure{Delivery}{$\mathbf{G},\mathcal{W},\mathbf{d}$} 
		\For{\texttt{$s \in [0,S)$}}
		\For{$i \in \{1,2,\ldots,\mathcal{L}_0\}$}
		\If {$s^{(i)}$ exists}
		\State Server sends $\underset{\substack{g_{j,k}=s^{(i)}\\ j \in [0,F),k\in[0,K)}}{\bigoplus}W^{d_k}_j$
		\EndIf
		\EndFor
		\EndFor    
		\EndProcedure
	\end{algorithmic}
\end{algorithm}

A $(K, F, Z, S^{(I)})$ generalized PDA $\mathbf{G}_{F\times K}$ represents an $F$-division caching scheme for a $(\Lambda,K,M,N)$ shared cache system with $\frac{M}{N}=\frac{Z}{F}$. 

In $\mathbf{G}$, the columns correspond to the users and the rows correspond to the subfiles. If an entry $g_{j,k}$ is $\star$, it means that the $k^{th}$ user has access to  $j^{th}$ subfile of all the $N$ files. Condition $C1$ ensures that all users have access to same number of sub-files. Thus, $\mathbf{G}$ can represent the side-information possessed by each user.

To obtain the user-to-cache association, $\mathcal{U}$ from $\mathbf{G}$, a grouping process need to be followed. Initially choose an integer $s \in [0,S)$, then for each row find all the columns in which $s$ appears irrespective of its superscript. The so obtained set of columns from each row correspond to a set $\mathcal{U}_{\lambda}$ in $\mathcal{U}$. Condition $C4$ guarantees that these set of users possess the same side-information set. To identify the remaining $\mathcal{U}_{\lambda}$, repeat the above procedure for other integers. 

Next, we describe how $G$ represents the delivery phase. Let $d=(d_0,d_1,\ldots,d_{K-1})$ be the demand vector.
In $\mathbf{G}$, assume that there are $m$ integer entries such that $g_{j_1,k_1}=\ldots=g_{j_m,k_m}=s^{(i)}$ where $s \in [0,S)$ and $i \in [\mathcal{L}_0]$. Then by condition $C3$ in Definition $2$, the sub-array formed by rows $j_1, \ldots, j_m$ and columns $k_1,\ldots,k_m$ is equivalent to a scaled identity matrix $\mathbf{I}_{m}$ up to row or column permutations as follows. 

\begin{table*}[ht]
	\centering
	\caption{Summary of some known PDA constructions}
	\begin{tabular}{| c | c | c | c | c| }
		\hline
		\rule{0pt}{3.5ex}
		\makecell{Schemes and parameters} & \makecell{Number of users\\ $K$} & \makecell{Caching ratio \\ $\frac{M}{N}$} & \makecell{Sub-packetization \\ $F$} & \makecell{Number of integers\\ $S$}  \\ [2pt]
		\hline
		\rule{0pt}{3.5ex}
		\makecell{MN PDA \cite{YCT} \\ For $t$, $K \in \mathbb{Z}^{+}$ with $t \in [0,K)$} & $K$ & $\frac{t}{K}$ & $\binom{K}{t}$ & $\binom{K}{t+1}$ 
		
		\\
		\hline
		\rule{0pt}{2.6ex}
		\multirow{2}{*}{\makecell{Scheme in \cite{YCT} \\ For $m$ and $q$ $\in \mathbb{Z}^{+}$, $q \geq 2$}} & \multirow{2}{*}{$q(m+1)$} & $\frac{1}{q}$ & $q^m$ & $q^{m}(q-1)$\\
		
		\cline{3-5}
		\rule{0pt}{2.6ex}
		& & $\frac{q-1}{q}$ & $(q-1)q^m$ & $q^m$\\
		\hline
		
		\rule{0pt}{5ex}
		\makecell{Scheme in \cite{PDA2}\\ For any $q$, $z$ and $m$ $\in \mathbb{Z}^{+}$\\ with $q \geq 2$ and $z < q$} & $q(m+1)$ & $\frac{z}{q}$ & $\left \lfloor{\frac{q-1}{q-z}} \right \rfloor q^m$ & $(q-z)q^m$\\
		\hline
	\end{tabular}
	
	\label{tab:pda}
\end{table*}

\begin{center}
	
	$  
	\begin{pmatrix}
	s^{(i)} & \star & \ldots  &  \star \\
	\star &  s^{(i)} & \ldots & \star   \\
	\vdots & \vdots & \ddots & \vdots \\
	\star & \star & \ldots & s^{(i)}
	\end{pmatrix}_{m \times m}$
\end{center}

From the above structure, it is evident that the users corresponding to the columns $k_1, \ldots, k_m$ form a clique. Hence, the server sends a message of the form
\begin{equation}
\underset{1 \leq u \leq m}{\bigoplus} W^{d_{k_u}}_{j_u}.
\label{gpda}
\end{equation}

The delivery scheme in line $30$ of Algorithm $2$ follows from \eqref{gpda}. 
Based on the above observations, we arrive at our main result which is stated in the following theorem.

\subsection{Main Result}
\begin{thm}
	\label{new}
	For a given $(\Lambda, K, M, N)$ shared caching problem with an association profile $\mathcal{L}=(\mathcal{L}_0,\ldots,\mathcal{L}_{\Lambda-1})$, an $F$-division caching scheme can be derived from a $(\Lambda, F,Z,S)$ PDA $\mathbf{P}=(p_{j,\lambda})_{F \times \Lambda}$ where $\frac{M}{N}=\frac{Z}{F}$. The delivery load required in the worst-case is 
	\begin{equation}
	R(\mathcal{L}) = \frac{\displaystyle\sum_{s=0}^{S-1} max\{i : s^{(i)} \textrm{ appears in } \mathbf{G}, i \in [\mathcal{L}_0]\}}{F} \textrm {\hspace{0.2cm}.}
	\label{thm4}
	\end{equation}
	where	$\mathbf{G}=(g_{j,k})_{F\times K}$ is the generalized PDA obtained from Algorithm $2$ representing the obtained $F$-division  caching scheme.
\end{thm}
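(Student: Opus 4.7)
The plan is to prove Theorem~\ref{new} in three stages: (i) verify that the array $\mathbf{G}$ produced by Algorithm~2 is a valid $(K,F,Z,S^{(I)})$ generalized PDA in the sense of Definition~2; (ii) check that the placement rule respects the memory budget $M/N = Z/F$ and that each scheduled linear combination is decodable by every participating user; and (iii) count transmissions to obtain \eqref{thm4}. The key structural observation driving everything is that each column of $\mathbf{G}$ is a relabelled copy of some column of the original PDA $\mathbf{P}$: the stars are preserved, and the non-star entries in the $i$-th copy of cache $\lambda$'s column carry the superscript $i$. Consequently, most properties of $\mathbf{G}$ can be pulled back to properties of $\mathbf{P}$.

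For stage (i), condition C1 is immediate since each column of $\mathbf{G}$ inherits the $Z$ stars of the corresponding column of $\mathbf{P}$. Condition C2 follows because every integer $s$ already appears in $\mathbf{P}$, so $s^{(1)}$ appears in $\mathbf{G}$, and the algorithm subsequently introduces $s^{(i)}$ for every $i$ up to $\max\{\mathcal{L}_\lambda : p_{j,\lambda}=s \text{ for some } j\}$. For C3, suppose $g_{j_1,k_1} = g_{j_2,k_2} = s^{(i)}$ with $k_1\neq k_2$; the shared superscript $i$ forces $k_1,k_2$ to sit in copies of two \emph{distinct} caches $\lambda_1\neq\lambda_2$, so $p_{j_1,\lambda_1}=p_{j_2,\lambda_2}=s$ in $\mathbf{P}$, and C3 of Definition~1 supplies the required $2\times 2$ pattern, which lifts unchanged to $\mathbf{G}$. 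The genuinely new constraint C4 needs a short case analysis: if $g_{j_1,k_1}=s^{(i_1)}$, $g_{j_1,k_2}=s^{(i_2)}$ and $g_{j_2,k_1}=\star$, the two non-star entries in row $j_1$ cannot come from different caches, since that would give two copies of $s$ in the same row of $\mathbf{P}$, violating C3(a) of Definition~1. Hence $k_1,k_2$ share a cache $\lambda$; their $\mathbf{G}$-columns then coincide in every star position, which forces $g_{j_2,k_2}=\star$.

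Once $\mathbf{G}$ is known to be a generalized PDA, the remaining stages are short. User $k$ caches $W^n_j$ whenever $g_{j,k}=\star$, which is $Z$ subfiles per file, so $M=NZ/F$. For the message indexed by $s^{(i)}$, condition C3 of Definition~2 shows that the non-star positions carrying this label form a scaled identity up to row/column permutation, hence every participating user has cached every other summand and can decode its requested packet by XOR-cancellation. The total number of transmitted messages is $\sum_{s=0}^{S-1}\bigl|\{i : s^{(i)} \text{ appears in } \mathbf{G}\}\bigr|$; and because the superscripts placed on cache $\lambda$ are exactly the contiguous values $1,2,\ldots,\mathcal{L}_\lambda$, this cardinality equals the maximum $i$ for which $s^{(i)}$ occurs, which yields \eqref{thm4} after dividing by $F$. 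The count is independent of $\mathbf{d}$, so it already equals the worst-case load. The main obstacle is the verification of C4, since it is the only condition with no immediate analogue in Definition~1; the resolution hinges on recognising that two non-star entries of $\mathbf{G}$ in the same row sharing an integer value must come from copies of the same PDA column.
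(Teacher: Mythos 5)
Your proposal is correct and follows essentially the same route as the paper: placement from the $\star$ pattern of $\mathbf{P}$, construction of $\mathbf{G}$ by column replication with incremented superscripts, decodability from condition C3 of Definition~2, and the load obtained by counting the distinct labels $s^{(i)}$. The paper's own proof is considerably terser --- it asserts rather than verifies that $\mathbf{G}$ satisfies the generalized PDA conditions --- so your explicit checks of C1--C4 (in particular the observation that two equal integers in one row of $\mathbf{G}$ must come from copies of the same column of $\mathbf{P}$, which yields C4) supply detail the paper leaves implicit, but they do not constitute a different argument.
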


\begin{proof}
	The proof of this theorem directly follows from Algorithm $2$. In a $(\Lambda,F,Z,S)$ PDA $\mathbf{P}$, rows represent the subfiles and columns correspond to the caches. The sub-files stored in each cache is defined by the symbol $\star$ present in the column corresponding to it. Condition $C1$ in the definition of PDA guarantees that each cache stores $NZ$ subfiles, where each subfile is of size $\frac{1}{F}$. Thus, the size of each cache is $NZ/F$, which is equal to $M$. When the user-to-cache assignment $\mathcal{U}$ is known, a generalized PDA $\mathbf{G}$ of size $F \times K$ is constructed as described in lines $16$-$24$ of Algorithm 2. Once the server receives the demand vector $\mathbf{d}$, it transmits a linear combination of those sub-files (of the requested files) marked by the entry $s^{(i)}$, where $s \in [0,S)$ and $i \in [\mathcal{L}_0]$. This process happens for all distinct $s^{(i)}$. The range of the superscript, $i$ is different for each $s$ as it depends on the profile $\mathcal{L}$. Thus, we obtain a transmission rate equal to \eqref{thm4}. The condition $C3$ in Definition $2$ ensures the decodability at each user. 
\end{proof}
\vspace{-0.1cm}
Now, we present an example to describe Theorem 3.\vspace{-0.2cm}
\begin{example} (4,8,4,8) shared caching problem with $\mathcal{L}=(3,2,2,1)$.
\end{example}

Consider a case with $K=8$ users sharing $\Lambda=4$ helper caches, each of size $M=4$ units of file, storing contents from a library of $N=8$ equally-sized files $\mathcal{W}=\{W^1, W^2,\ldots, W^8\}$. 

For the above scenario, the PUE scheme requires a sub-packetization of $\binom{\Lambda}{{\Lambda M}/{N}}$ which is equal to $6$ subfiles. Whereas, using a $(4,2,1,2)$ PDA we could find a coded caching scheme for the same setup, which would only need a sub-packetization of $2$. 

\begin{center}
	$ \mathbf{P} =
	\begin{pmatrix}
	\star & 1 & \star  & 0 \\
	0 & \star  & 1 & \star \\
	
\end{pmatrix}_{2 \times 4}$

\end{center}

\noindent Let $\mathbf{P}$ be a $(4,2,1,2)$ PDA satisfying the constraint $\frac{M}{N}=\frac{Z}{F}$. 

\textit{Placement}: As there are $2$ rows in $\mathbf{P}$, each file gets split into $2$ subfiles, i.e, $W^n =\{ W^n_0, W^n_1\}$ $\forall n \in [8].$ Since there is only a single $\star$ in every column, each cache stores one subfile of all the files marked by $\star$ in its corresponding column. We modify $\mathbf{P}$ by attaching the superscript to integer entries.

$$ \mathbf{P} =
\begin{pmatrix}
\star & 1^{(1)} & \star  & 0^{(1)} \\
0^{(1)} & \star  & 1^{(1)} & \star \\

\end{pmatrix}_{2 \times 4}$$

\textit{User-to-cache assignment}: Let $\mathcal{U}=\{\{1,2,3\}, \{4,5\}, \{6,7\},\{8\}\}$ with a profile $\mathcal{L}=(3,2,2,1)$. Then, we construct the generalized PDA $\mathbf{G}$ as 
$$  \mathbf{G} =
\begin{pmatrix}
\star & \star & \star & {1^{(1)}} & {1^{(2)}} & \star & \star & {0^{(1)}} \\
{0^{(1)}} & {0^{(2)}} & {0^{(3)}} & \star  & \star & {1^{(1)}} &{ 1^{(2)}} & \star \\

\end{pmatrix}_{2 \times 8}$$

\textit{Delivery}: Let $d=(1,2,3,4,5,6,7,8)$. There is a transmission corresponding to every integer with a distinct superscript in $\mathbf{G}$. The server takes an XOR of those requested subfiles marked by a particular numerical entry in $\mathbf{G}$.
Thus, the messages transmitted are
\begin{equation*}
\begin{aligned}
X_{0^{(1)}} & = W^1_{1} \oplus W^8_{0} & & &  X_{1^{(1)}} & = W^4_{0} \oplus W^6_{1} \\
X_{0^{(2)}} & = W^2_{1} & & & X_{1^{(2)}} & = W^5_{0} \oplus W^7_{1} \\
X_{0^{(3)}} & = W^3_{1}  \textrm{ \hspace{0.2cm}}. 
\end{aligned}
\end{equation*}

The load required by our scheme is ${5}/{2}$. Whereas, the load achieved by the PUE scheme is ${11}/{6}$ .

Thus, by utilizing the extensive results available on PDA constructions, we could get more practically realizable coded caching schemes for shared caches without paying much in delivery load. In Table I, we list some of the known PDAs which are useful to our discussion. The following examples make use of the PDAs given in Table I to illustrate the sub-packetization reduction achieved by our procedure. 

\begin{example}
$(9,45,15,45)$ shared caching problem with $\mathcal{L}=(8,7,6,6,5,4,4,3,2)$.
\end{example}

There is a server with $N=45$ files connected to $K=45$ users and to $\Lambda = 9$ caches, each with a normalized size $\frac{M}{N}=\frac{1}{3}$. Then for an association profile $\mathcal{L}=(8,7,6,6,5,4,4,3,2)$, sub-packetization and the worst-case delivery load achieved by our PDA-based construction and the PUE scheme are shown in Table II.

\begin{table}[ht]
\centering
\caption{Comparison (Using Construction A in \cite{YCT})}
\begin{tabular}{ c c c }
\hline
\\
Schemes & \makecell{Sub-packetization\\ $F$} & \makecell{Delivery load\\ $R(\mathcal{L})$} \\
\hline
\\

PUE scheme \cite{PUE}	& $84$  & $897/84 \approx 10.68$ \\
\\
Our scheme & 9 & 14 \\
\hline
\end{tabular}
\label{tab:multicol}
\end{table}

\begin{example}
$(9,45,30,45)$ shared caching problem with $\mathcal{L}=(8,7,6,6,5,4,4,3,2)$.
\end{example}
We consider the same setting as in Example $2$ with only change in the normalized cache size. That is, $N=45$, $K=45$, $\Lambda=9$ with $\frac{M}{N}=\frac{2}{3}$. Constructions given in either \cite{YCT}  or \cite{PDA2} can be used to get a PDA satisfying the above parameters. In this example, we employ the PDA from \cite{PDA2} and the performance analyses against the PUE scheme is given below.

\begin{table}[ht]
\centering
\caption{Comparison (Using the PDA in \cite{PDA2})}
\begin{tabular}{ c c c }
\hline
\\
Schemes & \makecell{Sub-packetization\\ $F$} & \makecell{Delivery load\\ $R(\mathcal{L})$} \\
\hline
\\

PUE scheme \cite{PUE}	& $84$  & $279/84 \approx 3.32$ \\
\\
Our scheme  & 18 & 3.5 \\
\hline
\end{tabular}
\label{tab:multicol}
\end{table}

\subsection{Generalized PDA representation for the PUE scheme}
In this subsection, we show that the PUE scheme can also be derived from a PDA.  

Consider a $(\Lambda,K,M,N)$ shared caching problem with an association profile $\mathcal{L}=(\mathcal{L}_0,\ldots,\mathcal{L}_{\Lambda-1})$. Let $t_s \triangleq\frac{\Lambda M}{N}$. To obtain the PUE scheme, we need to start with an MN PDA corresponding to $\Lambda$ users, satisfying the condition $\frac{M}{N}=\frac{Z}{F}$. Therefore, consider a $(\Lambda,F,Z,S)$ MN PDA $\mathbf{P}$ where $F = \binom{\Lambda}{t_s}$, $Z=\binom{\Lambda-1}{t_s-1}$ and $S=\binom{\Lambda}{t_s+1}$.
As mentioned in Section II.B, the rows of $\mathbf{P}$ are indexed by sets $\mathcal{T}\subseteq [0,\Lambda)$, where $|\mathcal{T}|=t_s$. For each column $\mathbf{p}_{_\lambda}$, the symbol $\star$ is present in those rows in which $\lambda \in \mathcal{T}$. The integer entries in $\mathbf{P}$ are obtained by defining a bijective function as described in \eqref{pda}. To all the integer entries, associate the superscript `$1$' and then follow the procedure in lines $13$ - $25$ of Algorithm 2. The so constructed $(K,F,Z,S^{{(\mathcal{L}_0})})$ generalized PDA $\mathbf{G}$ can completely characterize the shared caching problem. We further illustrate this using the Example $1$ discussed in Section III.B.

\textit{Example:}$(4,8,4,8)$ shared caching problem with $\mathcal{L}=(3,2,2,1)$.

$t_s = \Lambda M/N =2$ results in $\binom{\Lambda}{t_s}=6$, $\binom{\Lambda-1}{t_s-1}=3$ and $\binom{\Lambda}{t_s+1}=4$. Therefore, start with a $(4,6,3,4)$ PDA $\mathbf{P}$. In this example, the caches are numbered as $\{1,2,3,4\}$.

\[ \mathbf{P} = 
\begin{pmatrix}
\star & \star & {0}  & {1} \\
\star & {0} & \star &  {2}\\
\star & {1} & {2} & \star \\
{0} & \star & \star & {3}\\
{1} & \star & {3} & \star \\ 
{2}  & {3} & \star & \star
\end{pmatrix}_{6 \times 4}
\Longrightarrow
\begin{pmatrix}
\star & \star & {0}^{(1)}  & {1}^{(1)} \\
\star & {0}^{(1)} & \star &  {2}^{(1)}\\
\star & {1}^{(1)} & {2}^{(1)} & \star \\
{0}^{(1)} & \star & \star & {3}^{(1)}\\
{1}^{(1)} & \star & {3}^{(1)} & \star \\ 
{2}^{(1)}  & {3}^{(1)} & \star & \star
\end{pmatrix}
\]

The rows of $\mathbf{P}$ are represented by 2-sized subsets of $\{1,2,3,4\}$ arranged in lexicographic order. Hence, $W^n = \{W^n_{\{1,2\}}, W^n_{\{1,3\}}, W^n_{\{1,4\}}, W^n_{\{2,3\}}, W^n_{\{2,4\}}, W^n_{\{3,4\}}\}$. The placement given by this PDA is exactly same as that of the placement in \cite{PUE}.

For $\mathcal{U}=\{\{1,2,3\},\{4,5\},\{6,7\},\{8\}\}$, we obtain a $(8,6,3,4^{(3)})$ generalized PDA $\mathbf{G}$ .
$$ \small{  \mathbf{G} =
\begin{pmatrix}
\star & \star & \star & \star & \star & {0^{(1)}} &  {0^{(2)}}& {1^{(1)}} \\
\star & \star & \star & {0^{(1)}} &  {0^{(2)}} & \star & \star & {2^{(1)}} \\
\star & \star & \star & {1^{(1)}} & {1^{(2)}} & {2^{(1)}} & {2^{(2)}} & \star \\
{0^{(1)}} & {0^{(2)}} & {0^{(3)}}& \star & \star & \star & \star & {3^{(1)}}\\
{1^{(1)}}& {1^{(2)}} & {1^{(3)}} & \star & \star & {3^{(1)}} & {3^{(2)}} & \star \\
{2^{(1)}}& {2^{(2)}} & {2^{(3)}} & {3^{(1)}} & {3^{(2)}} & \star & \star & \star  
\end{pmatrix}_{6 \times 8}}$$

\noindent Let $\mathbf{d}=(1,2,3,4,5,6,7,8)$, the delivery is as follows
\begin{equation*}
\begin{aligned}
X_{0^{(1)}} & = W^1_{\{2,3\}} \oplus W^4_{\{1,3\}} \oplus W^6_{\{1,2\}} \\
X_{1^{(1)}} & = W^1_{\{2,4\}} \oplus W^4_{\{1,4\}} \oplus W^8_{\{1,2\}} \\
X_{2^{(1)}} & = W^1_{\{3,4\}} \oplus W^6_{\{1,4\}} \oplus W^8_{\{1,3\}} \\
X_{3^{(1)}} & = W^4_{\{3,4\}} \oplus W^6_{\{2,4\}} \oplus W^8_{\{2,3\}} \\
X_{0^{(2)}} & = W^2_{\{2,3\}} \oplus W^5_{\{1,3\}} \oplus W^7_{\{1,2\}} \\
X_{1^{(2)}} & = W^2_{\{2,4\}} \oplus W^5_{\{1,4\}} \\
X_{2^{(2)}} & = W^2_{\{3,4\}} \oplus W^7_{\{1,4\}} \\
X_{3^{(2)}} & = W^5_{\{3,4\}} \oplus W^7_{\{2,4\}} \\
X_{0^{(3)}} & = W^3_{\{2,3\}}  \textrm{ \hspace{0.01cm},\hspace{0.2cm}}
X_{1^{(3)}}  = W^3_{\{2,4\}} \textrm{ \hspace{0.2cm}and \hspace{0.2cm}} 
X_{2^{(3)}}  = W^3_{\{3,4\}} \textrm{ \hspace{0.01cm}}. 
\end{aligned}
\end{equation*}

The signals transmitted and the load required match exactly with the PUE scheme. Thus, we could represent the PUE scheme using generalized PDA. 


\begin{rem}
Note that the coding gain, which is defined as the number of users benefiting from a single multicast message, is not the same for all transmissions. This in fact arises due to the asymmetry in the user-to-cache assignment.
\end{rem}

\begin{rem}
If the assignment of users to caches is uniform, all the integer entries will be occurring equal number of times in generalized PDA $\mathbf{G}$. Then, the delivery load expression \eqref{thm4} reduces to
\begin{equation}
R(\mathcal{L}) = \frac{K S}{\Lambda F} \textrm {\hspace{0.2cm}.}
\end{equation} 
In this case, every transmission provides the same coding gain.
\end{rem}

\begin{rem}
Throughout this work, we focused on uncoded placement. Hence, we considered only the PUE scheme for comparison. The other schemes mentioned in \cite{IZY}, \cite{XGW} use coded placement to achieve a better performance than the PUE scheme. The schemes given in \cite{IZY} and \cite{PaE}  want the association profile, $\mathcal{L}$ to be known during the placement phase itself. Whereas, our construction and PUE scheme follow an association profile oblivious placement.

\end{rem}

\section{Conclusion}

In this work, we introduced the generalized PDA structure to completely describe the coded caching schemes for shared caches with a single array. We proposed a procedure to derive new coded caching schemes for shared caches using PDAs. This enabled us to use the available PDA constructions to get schemes for shared caches with low sub-packetization levels.

\end{document}